\date{}
\author{Yong Tan\\
entermarket@163.com}
\title{Find an Optimal Path in Static System and Dynamical System within Polynomial Runtime}
\theoremstyle{plain}
\newtheorem{theorem}{Theorem}
\newtheorem{inference}{Inference}
\newtheorem{definition}{Definition}
\begin{document}
\maketitle
\begin{abstract}
We study an ancient problem that in a static or dynamical system, sought an optimal path, which the context always means within an extremal condition. In fact, through those discussions about this theme, we established a universal essential calculated model to serve for these complex systems.  Meanwhile we utilize the sample space to character the system. These contents in this paper would involve in several major areas including the geometry, probability, graph algorithms and some prior approaches, which stands the ultimately subtle linear algorithm to solve this class problem. Along with our progress, our discussion would demonstrate more general meaning and robust character, which provides clear ideas or notion to support our concrete applications, who work in a more popular complex system.
\end{abstract}
~\newline
Keywords: \emph{static system; dynamical system; optimal path;}

\section{Introduction.}
It is clear for those autonomous machines that we particularly desire them can roam in a smart way. However human has deemed the problem might be classed to finding out the optimal lane in any scene. On more popular meanings, that is to plot a perfect decision-lane within least cost in a complex system. For example, we plot navigation for a commuting in a city where it is often in a vehicle jam; let many robots cruising in a narrow and twisted space without crowd in a certain channel.

It is absolutely unfair to say there has not any available algorithm attempt to solve this class problem before this paper. But while we survey the context about those methods throughout, we should have these prime problems issue: these approaches are likely to only work in a \emph{static system} with crafted weights; or there need crafted a group natural numbers to tick endpoints; even to state to refuse a complex entry but cannot define what is a complex instance. Such weird requests cannot be clarified to us. In a word, our followed job must face to a vague context about past methods, and use the new theories and trials to clarify those amid implicit relationships, which ever puzzled our minds. Here author has not got his own mind to make a satire to these giant and memorable predecessors, moreover in this paper author are ongoing definitely and masterly to use these milestone approaches created by those pioneers, but for needing to comment their geometric meaningfulness entirely. Another our task is let our solution more \emph{general significance}.

But for the fact, we will not eventually solve all problems in this paper, actually that is impossible and we definitely need much more theories to support our works. Here we have to utilize the sample space, so that we can use these small instances to reveal some attributes of computing on complex system. Hence this is only a new beginning to solve this class problem by a new universal approach.

Well, in this paper to expand these themes, author will strictly be on the way with respected to such procedure: introducing these algorithms; raising up a conclusion and proof; examine the conclusion with trials; we might insert some contents about optimized method; finally issue discussion to finish this loop.

\section{Preliminary}
Primarily, we reserve some conventional characters, letters $n, m, V$ and $G$. Well, while we say a graph or instance, it may be represented by the form $G=(V,\tau)$\cite{1}, which we denote the collection of instance endpoints by $V$ and having form $n=\vert V\vert$ to present the cardinality about set $V$. The Greek letter $\tau$ is represented a binary relation among those endpoints, that if there is a morphic arc from $u$ to $v$, we can denote the arc by $u\tau v$ or $(u,v)\in\tau$. Thus we might make a set partition according upon this logical structure with a Cartesian product $s(u)=R(u)\times L(u)$. For this data structure we call \emph{unit subgraph}\cite{1}, which figures a structure out to a star-tree image with root to its own neighbors we call them \emph{leaves}. They are surrounding root as those targets to which, \textquotedblleft arrows\textquotedblright are shot off from root. At the encode level the mathematical structure can be dealt to a 2 dimension array in memory which the root may be an index. The letter $m$ naturally is used to denote the cardinality of a leaf collection like $m=\vert L(u)\vert$.

Otherwise, to above structure of unit subgraph, we either might describe another structure as Cartesian product $\beta(u)=L(u)\times R(u)$, which becomes to the leaf shoot the root with arrow, we call \emph{visiting subgraph}\cite{1}.

The relationship between the two structures might represent some distinct types of instance.

\begin{flushleft}
\begin{enumerate} 
\item There is a $uncertain$ state on instance, we name it \emph{mixed}, likes $\forall(x,y)\in s(x)\nLeftrightarrow(\exists(y,x)\text{ or }\nexists(y,x))\in\beta(x)$.
\item If the proposition $\forall(x,y)\in s(x)\Leftrightarrow\forall(y,x)\in\beta(x)$ holds, we call this type \emph{simple}.
\item Else if $\forall (x,y)\in s(x)\Rightarrow\forall(y,x)\notin\beta(x)$, we say it \emph{directed}.
\end{enumerate}
\end{flushleft}
~\newline
In this paper, we will not particularly or strictly appoint a certain type for instance, that it sounds as possible to satisfy the entry regarded to general or universal meaning. But we need to claim that we merely study the finite graph. Upon the above definition, the \emph{path} in figure means a queue with some distinct arcs having $\tau_{i-1}(2) =\tau_{i}(1)$ for $\tau_{i-1},\tau_{i}$ in path.

If each arc on instance is ticked by numbers to present some certain meanings, we call \emph{weighted figure}, denoted by a triple array $G=(V,\tau,W)$. The phrase \emph{total weight} is exclusively to mean a path $P$ with the sum of weight on each segment at it, which may be computed by the form.

\begin{equation}\label{1}
 W(P)=\sum_{i=1}^{N}\Vert w(\tau_{i})\Vert;\quad\text{for } P=(\tau_{i})_{i=1}^N.
 \end{equation}
~\newline
Furthermore while we refer the \emph{optimal path} from a source to a target on an instance, the context means that there is a path in a path set with \emph{minimum total weight}, else the phrase \emph{shortest path} means this item contains \emph{minimum cardinality} among those members from source to target. Of course, we can further regard the fact that an extreme path might be classed to either one amid two types or both two. Default, we regard these weights as positive volume and greater than 0.\\
~\newline
\textbf{Graph Partition. }We roughly depict the approach as making a homogenous copy for instance, as if the method dispatches the endpoints to a group ordinal components, we denoted by $R=(r_{i})_{i=1}^k:1\leq k\leq n$, and we call the component \emph{region}. We might have the following forms to character it.

\begin{flushleft}
\begin{enumerate} 
\item $R\subseteq V$ and $R=(r_{i})_{i=1}^k:1\leq k\leq n$;
\item $\forall r_{i}\in R \Longrightarrow r_{i}\neq\varnothing$ and $\forall r_{i},r_{j}\in R\Longrightarrow r_{i}\bigcap r_{j}=\varnothing$ for $i\neq j$;
\item $\forall r_{i-1},r_{i}\in R$ having $\forall v\in r_{i}\Longrightarrow \exists u\in r_{i-1}$ and $(u,v)\in\tau$.
\end{enumerate}
\end{flushleft}
~\newline
This partition structure issues a layout with several features: we might call each endpoint in the first region $source$; if there any endpoint disconnected with sources, then it would be impossible to lay in any region made by this method, thus this feature might take a condition to check a singleton node in for whether it is in connected. So we need not specially claim the connected attributer for an instance likes conventional statement. Likewise if the first region is a singleton component with an endpoint, then we can prove the conclusion that there is at least a shortest path \emph{p} between source and another endpoint $v$ as a target in $i$th region for $1<i\leq k$, there having $i-1=\vert p\vert_{\in\tau}$, at where we use the subscript term to say each member in $p$ is arc. The cardinality of path $\vert p\vert$ by default still means the amount of endpoints. The proof about the \emph{shortest path theorem} is in the paper\cite{1}. We plan to introduce this algorithm in next section again, because it is nearly to involve in the ergodic operation. The follow diagram shows the homogenous copy made by partition method on the dodecahedron.

\begin{center}
\includegraphics[height=35mm]{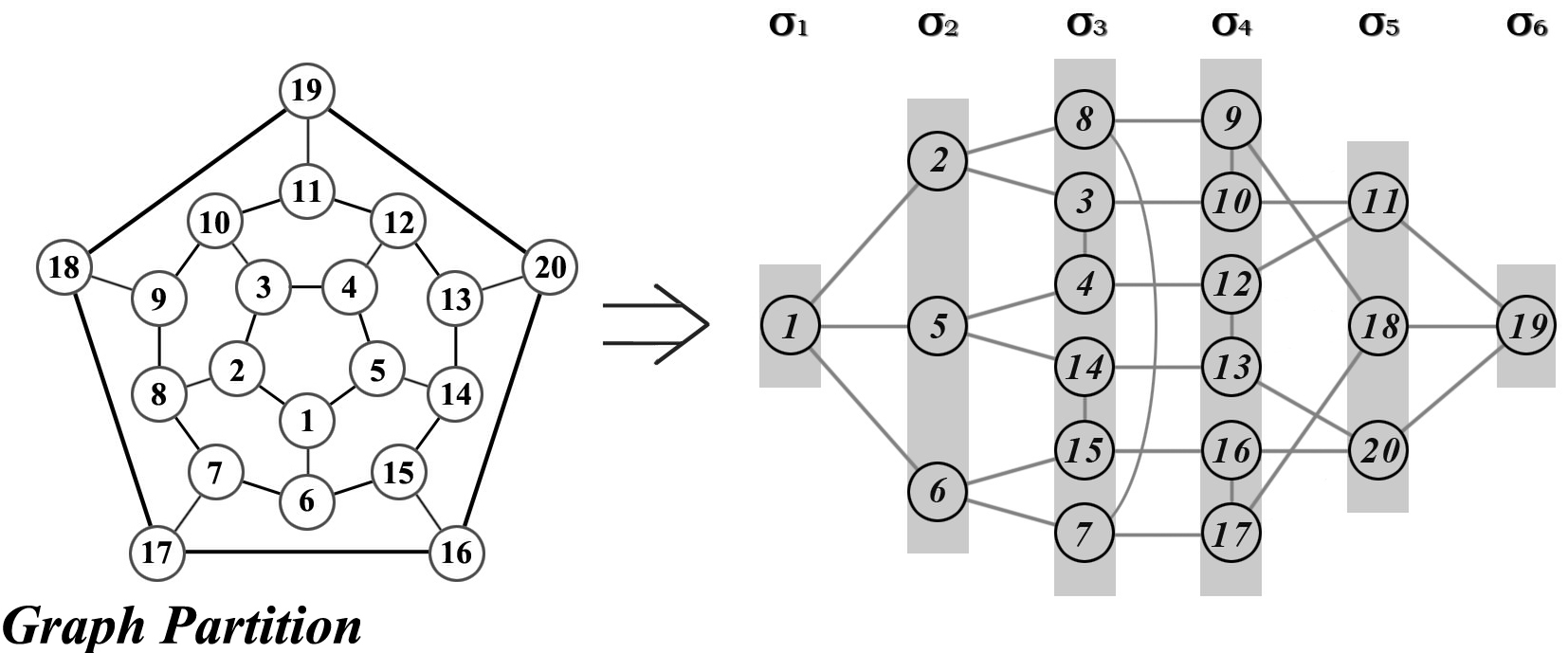}\\
Figure 1
\end{center}

\section{Algorithms}
This section we introduce these algorithms, and first one is graph partition. Here if we observe the description of graph partition, we might sound that for a unit subgraph such that the leaves would be departed to the $fore$, $current$ and $later$ region. Consequently, if any leaf is dispatched to a region amid them, then it should be in the complement of what the leaf collection $L$ differ to other two regions. Thus we can let the program iteratively divide the leaves to next region till not any leaf in leaf set for partition. We issue the algorithm in following.

\renewcommand\arraystretch{1.2}
\begin{longtable}{p{120mm}}
\caption{\textbf{Graph Partition Algorithm.}}\\
\toprule
\textbf{Input} $\tau=\{s_{i} \}_{i=1}^n$ for $s_{i}=R(u)\times L_{\in s} (u)$.\\
\textbf{Other Variables} $c=1$; $W\leftarrow s, s$ is the source.\\
\textbf{Output} $R=(r_{i})_{i=1}^{k\leq n}$.\\
\midrule
001.\quad\textbf{LOOP} $W\neq\varnothing$ \textbf{Do}\\
002.\quad\quad\textbf{For} $\forall x\in W$ \textbf{Do} $\alpha \leftarrow v:\forall v\in L_{\in s}(x)$ and $r_{\in R}(c)\leftarrow x$; \textbf{End}.\\
003.\quad\quad $\alpha \coloneq (\alpha\setminus W)\setminus r_{\in R}(c-1)\colon~ r_{\in R}(c-1)=\varnothing$ if $c\leq 0$;\\
004.\quad\quad \textbf{If} $\alpha\neq\varnothing$ \textbf{Than} $W\coloneq\alpha$ and $c\coloneq c+1$; \\
005.\quad \textbf{Go to LOOP}.\\
\bottomrule
\end{longtable}
~\newline
About the runtime complexity for this algorithm, however we can regard the cardinality of a region as $\vert r\vert=\lambda n$ for $0<\lambda\leq 1$, then the amount of regions may be $1/\lambda$, and the amount of leaves dabbling in a region might be many of $\lambda nm$. The program needs to remove those members off it, which have been in the fore and current regions, whose size both might be $\lambda n$. Therefore the complexity term might be $\lambda^2 n^2 m$, but for need to remove the repeated members in practice, the term might be $(\lambda mn)^2$ eventually. Hence the algorithm complexity is $(\lambda mn)^2\cdot (1/\lambda)=O(E^2)$ for $E=mn$. 

~\newline
\textbf{Optimization. }If we further observe above operation carefully, then we can identify the fact: each leaf set would be screened for only once. If we have two hashed table to record the states for each leaf, including dispatch and remove, then the program might save the time for the operations of comparison jobs among those arrays. Of course we have to reform the label on each endpoint to a group of continual natural number for compute the offset in array when program has the operation on those members. Here we individually setup two hash tables $T$ and $T^{\prime}$, and the partition algorithm should be updated in following.\\

\renewcommand\arraystretch{1.2}
\begin{longtable}{p{120mm}}
\caption{\textbf{Optimize Graph Partition Algorithm.}}\\
\toprule
\textbf{Input} $\tau\colon \vert \tau\vert =n$, $\tau[i]=L_{\in s}(i)$ for $i> 0$ and $i\in V$.\\
\textbf{Hash Tables } $T,T^{\prime}\colon T[i]=T^{\prime}[i]=0$ and $\vert T\vert=\vert T^{\prime}\vert=n$;\\
\textbf{Output} $R=(y_{i})_{i=1}^{k\leq n}\colon y_i\in V$.\\
\midrule
001.\quad\textbf{LOOP} $W\neq\varnothing$ \textbf{Do}\\
002.\quad\quad\textbf{For} $\forall x\in W$ \textbf{Do} $T[x]=c$ and $R\leftarrow x$; \textbf{End}.\\
003.\quad\quad\textbf{For} $\forall x\in W$ \textbf{Do}\\
004.\quad\quad\quad\textbf{For} $\forall u\in L_{\in s}(x)$ \textbf{Do}\\
005.\quad\quad\quad\quad \textbf{If} $T^{\prime}[u]=0$ and $T[u]=0$ \textbf{ Than } $r\leftarrow u$ and $T^{\prime}[u]=1$;\\
006.\quad\quad\quad\textbf{End}.\\
007.\quad\quad\textbf{End}.\\
008.\quad\quad \textbf{If} $r\neq\varnothing$ \textbf{Than} $W\coloneq r$ and $c\coloneq c+1$;\\
009.\quad \textbf{Go to LOOP}\\
\bottomrule
\end{longtable}
~\newline
Because the operation of writing and reading on hash table is within $O(1)$, when consider every leaf needs to query hash table $T$ and $T^{\prime}$ for once and update them for once, then the complexity term is $O(4E)$  for $E=mn$. \\
\newline
\textbf{Exploring.} While the job of graph partition ends, we either have gained an evident geometric layout. We merely need to concern these leaves in the fore region for any root in addition to source, because if there is a visitor wants to reach target $v$ in the $i$th region from source $u$, then he absolutely needs to pass through the therein certain one leaf in the fore region. If we stand at the target to view this progress, we will certainly select amid fore-leaf which might take the total weight minimum for target. Along this idea, in addition to source, by this same strategy we may screen those leaves for their own roots from the second region iteratively till to the target and its own region. Thus the progress may be viewed as an iterative process to compute the minimum total weight for every path form source to those endpoints. Here we can write the form to present this method.

\begin{equation}\label{2}
\begin{split}
w_{u\in r_{j}}=\bigcup_{\text{min}}&(f(x,u) + w_{x})\\
&\text{for } f(x,u)\in W_{\in\tau}\text{ and }x\in (L_{\in\beta}(u)\cap r_{j-1}).
\end{split}
\end{equation}
~\newline
The algorithm is in following.

\renewcommand\arraystretch{1.1}
\begin{longtable}{p{120mm}}
\caption{\textbf{Hybrid Algorithm.}}\\
\toprule
\textbf{Input} \\
$1.\quad \tau\colon \vert \tau\vert =n$, $\tau[i] = L_{\in\beta}(i)$ for $i> 0$ and $i\in V$;\\
$2.\quad W_{\in\tau}=(w_{i})\colon \vert w_{i}\vert = m$ and $w_{i}[j]>0$, it is weight table\\
$3.\quad R=(y_{i})\colon y_{1}=s$, it is graph partition\\
$4.\quad T_{\in V}[i]=t_{i}\colon 1\leq\vert T_{\in V}\vert \leq n$, node $\rightarrow$ subscript of region \\
\newline
\textbf{Output} $\Sigma, P\colon \vert \Sigma\vert = \vert P\vert = n$ and $\Sigma[s]=c$; \\
\midrule
001.\quad\textbf{For} $i=2\rightarrow\vert R\vert$ \textbf{do}\\
002.\quad\quad $u=R[i],w=0,y=0$;\\
003.\quad\quad \textbf{For} $\forall x\in L_{\in\beta}(u)$ \textbf{do}\\
004.\quad\quad\quad \textbf{If} $T[x]<T[u]$ \textbf{and}\\
 \quad\quad\quad\quad\quad\quad\quad\quad\quad($y=0$ \textbf{or} ($y\neq 0$ \textbf{ and }$w>\Sigma[x] + W[x][u]$))\\
005.\quad\quad \quad\textbf{Than} $w=\Sigma[x] + W[x][u],~y=x$;\\
006.\quad \quad\textbf{End}.   \\    
007.\quad$\Sigma[u]=w,P[u]=y$;\\            
008.\quad\textbf{End}.\\ 
\bottomrule
\end{longtable}
~\newline
The program screens each member in any region without concerning their ordinal in the region, and the ergodicity was along on the region queue in partition. In this progress, we used the \emph{visiting structure} to gain leaves and qualify them by querying hash table $T$ to guarantee the each leaf we interested had been in the fore region. Thus we employed a competitive mechanism to pick up the winner, which could make total weight on the root minimum.

The algorithm is to check $n$ leaf-sets for precisely once. Moreover we merely want to seek a leaf amid $m$ members, thus we herein use the method of bubble sorting for only $m$ times comparison. Hence the complexity of algorithm is $O(mn)$. But we must consider there is a query to the weight table $W$ for $mn$ times within whole progress. The weight table is a 2 dimensions array and it is a hash querying to gain a volume array by natural number index which presenting a root. Thus the complexity of query weight table is $cm$ for once in the worst case, so the overall runtime complexity is $O(cmE)$  for $E=mn$. 

The following table exhibits those outcomes of practical runtime from our trials, and the physics unit for measure outcome is $microsecond$ in addition to the option \textquoteleft$k=$\textquoteright, which is the variable about the quantity of endpoints on instance.\footnote {All programs were encoded by \emph{PHP5.0} and running on a laptop with i3-3217U 4 cores, 4G memory and Windows 8.1 OS}

\renewcommand\arraystretch{1.2}
\begin{longtable}{|p{8mm}|p{10mm}|p{10mm}|p{10mm}||p{8mm}|p{14mm}|p{14mm}|p{14mm}|}
\caption{\textbf{Practical Runtime on a Series of Objects}}\\
\toprule
$k=$&\textbf{\emph{G.P.}}&\textbf{\emph{ E.}}&\textbf{\emph{T.T.}}&$k=$&\textbf{\emph{G.P.}}&\textbf{\emph{ E.}}&\textbf{\emph{T.T.}}\\
\midrule
\textbf{20}&3.21&4.45&7.66&\textbf{120}&99.13&163.73&262.86\\
\textbf{40}&9.77&15.37&25.14&\textbf{140}&132.57&223.95&356.52\\
\textbf{60}&21.57&34.77&56.34&\textbf{160}&180.15&301.155&481.3\\
\textbf{80}&45.55&67.57&113.12&\textbf{180}&231.44&384.49&615.93\\
\textbf{100}&68.51&115.09&183.6&\textbf{200}&274.44&474.58&749.02\\
\bottomrule
\end{longtable}
~\newline
The options meaning are: 
\begin{flushleft}
\begin{enumerate} 
\item $G.P.$ is the practical runtime for graph partition.
\item $E.$ is the practical runtime for exploring model. 
\item The final $T.T.$ is total number of the fore two items.
\end{enumerate}
\end{flushleft}
~ \newline
We plotted a curve diagram for an intuitive exhibition to present the relation among the outcomes and relevance instances in above table.

\begin{center}
\includegraphics[height=40mm]{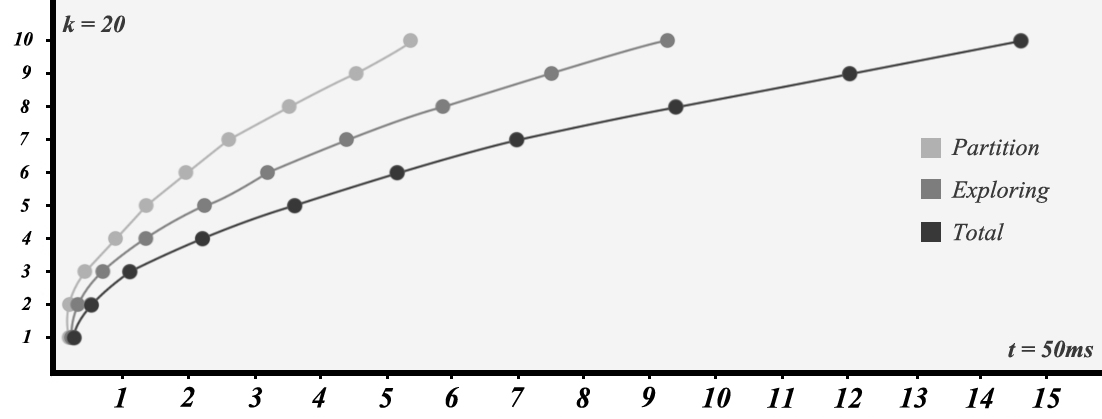}\\
Figure 2
\end{center}
~ \newline
For the experiment object, we chose the square grid figure and the source we specified was at a corner of instance, and for the target we selected the one on the opposite corner to source at the diagonal. We let the variable $k$ as the number of many endpoints on a row or column, which on above diagram the variable range lays on the vertical. Then we had not only the quantity of endpoints on instance with $n=k^2$, but also the quantity of arcs might be $4k(k-1)$. In this trial, the $k$ value is in the range of 20 to 200, so that it may be properly to say that they are representing these actual entries which were in an array of square numbers $(400, 1600,\ldots, 4\cdot 10^4)$. Hence to the curve, however to partition or exploring, they both showcased the variance of quadratic form.
 
For example, the greatest number on the vertical is 10, which it means the instance is containing 40 000 nodes and 159 200 arcs. The scale might be enough to cover any biggest city on the current earth. I think you may learn what significance these numbers implied in practice to those ventures, which items are about to vehicle navigation system.

\section{Static System}
To above algorithm, we might say it is a hybrid algorithm, and the new approach only needs the features of finite for any instance as entry. Instead of stating the connected attribute, because we might further use the graph partition to qualify the instance. Now you can find more similar methods to other methods in this new approach, in special to \emph{Dijkstra\textquoteright s algorithm}\cite{3}, \emph{Bellmanford\textquoteright s algorithm}\cite{4}, \emph{Floyd\textquoteright s algorithm}\cite{5}, \emph{Johnson\textquoteright s algorithm}\cite{6} and etc. Since the layout produced by graph partition, we might have a qualitative analysis about those contexts for these pioneers, namely we can comment them in a geometric way.

To use Dijkstra\textquoteright s algorithm, there someone definitely suggest you to prepare a perfect directed graph with a few of endpoints, because the method could not harness the visiting to nodes and perhaps turn back to source or repeated segment on a \emph{mixed} figure. To Bellmanford\textquoteright s algorithm, we must either prepare a group of rather crafted numbers for endpoints\textquoteright~labels, which we hope this namespace maybe correctly express some inside certain relationships, so that program could turn out an optimal result. In other word, the calculation is build on a uncertain basis framework, hence there someone maybe warn you that the entry figure has not got complex relationship, but he cannot answer what is a complex relationship. Here we issue the most representative contexts to express a common notion that these vague contexts actually are from a same source, the geometric layout, through they have had so much distinct before. If you study Bellmanford’s method on a small sample, you can find that the crafted numbers label just achieve such idea which exploring along the level like on a tree structure.

The above graph partition approach provides a homogenous figure to the instance, so that the intuitive geometry layout might properly emerge. If we set the new figure as a network in a physics field, and let the potential stream flows from source, passes through the interval regions and finally stops at the target and its own region, we might naturally have each region regarded as a nodes cluster at the contour line and its subscript likes to label the energy gradient, the shortest path to target either may become the shortest path of physics stream.

The above notion is the key to comment these prior approaches, but we cannot deny them, that we have ever been attempting to seek out the geometric meaning as well, till the field layout appeared. So in this section, we prime and thoroughly clarify these conceptions at logical level, and their relationships to the new approach that we made. Looking at the following table, which is a trial on a \emph{simple} grid figure with $n=5^2$, at where source and target are both on the opposite corners at the diagonal line.

\renewcommand\arraystretch{1.0}
\begin{longtable}{|p{10mm}|p{22mm}|p{22mm}|p{22mm}| p{22mm}|}
\caption{\textbf{Outcomes of BOTS and EBOGP.}}\\
\toprule
\textbf{No.}&\textbf{\emph{Algorithm}}&\textbf{\emph{Results}}&\textbf{\emph{Loops}}&\textbf{\emph{Time(ms)}}\\		
\midrule
$\textbf{1}$&$BOTS$&8\,512&81\,598&5\,000\\
\midrule
$\textbf{2}$&$EBOGP$&$70$&$252$&$5.53$\\
\bottomrule
\end{longtable}
~\\ \newline
Look at the second row in the above table, the job of \emph{BOTS}\cite{1} algorithm actually is used to search $Hamilton$ path, which yet could enumerate all possible connected paths among pairs source and target entirely. In the paper \cite{1}, we have proved that each path $p$ therein outcomes has $1\leq\vert p\vert\leq n$ , if the entry is simple. To other hand, by the Dijkstra\textquoteright s method we actually seek the extreme total weight, nor concern the length, as if it seems to search all possible and find out the only one ideal result like BOTS method.

Even if the accuracy is strict, but for some numbers in above table, it issues the accuracy is not whole story, nor even a critical role, because of another factor about to computing resources. That is amazing for a small number $n=25$ as a little entry, it has turned out a outcome, a large number which there are 8\,512 paths. That is caused by combinative possibility, likewise this is why you maybe ever before receive the warning about the scale of entry for using Dijkstra\textquoteright s algorithm, the warning was done since that what only a small entry might quickly exhaust your entire computing resource, often showing the case out of memory, and even not a method for you to forewarn the crash.

For the second algorithm \emph{EBOGP}\cite{2}, that author briefly names it here. This approach does the exploring job based on the layout of graph partition, that has been introduced in the paper\cite{2}. Nevertheless we can learn the scale of outcome is only less than 10\% to BOTS with $k=5$, but the complexity is attributed to the factorial item about $\sqrt{n}$. For a square grid instance, the form $S(2k-2,k-1)$\cite{2} exactly presents the relationship between outcome and entry, and the quantity of loops in program might be $S(2k,k)$. Namely, it said that growth still grips enough power to collapse any computer with $k$ just being a bit large number.
 
We denote the result set from EBOGP by $R_{EBOGP}$, that obviously it is the subset of BOTS ones, which we denote by $R_{BOTS}$. The set $R_{EBOGP}$ is particular to what it is the collection of all shortest paths from source to target, for any instance. Therefore we have an idea to exam our new algorithm, herein we call the new method \emph{Hybrid Dijkstra\textquoteright s algorithm}, that we might have the abstracted sequence $R_{HDA}\subseteq R_{EBOGP}\subseteq R_{BOTS}$  to discuss the problem about the accuracy of outcomes yielded by these approaches.

Before our discussion, we must mention a major and critical conception \emph{triangle inequality}\cite{7}. We popularize this inception to polygon, that there is a definition as follow:

\begin{definition}
Let A be a collection $A=\{x_{i}\}_{i=1}^{n\geq 3}$ and $\forall x_{i}>0$. There is a natural number $M$ with $3\leq M\leq n$. Let $M=m_1+ m_2$ with $m_1,m_2 > 0$. If there are two subsets $a_1,a_2\subseteq A$ with $m_1=\vert a_1\vert$ and $m_2=\vert a_2\vert$, we call they satisfy \emph{polygon inequality relationship}, if and only if 
\begin{center}$m_1\leq m_2\Longleftrightarrow \sum _{\in a_1}x_{i}\leq \sum_{\in a_2}x_{j}.$\end{center}
 Otherwise we call \emph{anti polygon inequality relationship} for $\sum_{\in a_1}x_{i}>\sum_{\in a_2}x_{j}$.
\end{definition}
~\newline
For our notion, the inequality relationship depictures the fact that we assume there is an instance of polygon with a source and a target on it, if we find the shortest path from source to target, then it similarly might be the optimal one based on the polygon inequality relationship. By this rule support, we can deem the optimal path amid those members in set $R_{EBOGP}$ either is that global one on instance, hence we may focus on the problem that how the outcome of hybrid approach is. Here, we firstly focus on the case for all weights on instance equal to a unique constant, so we might prove the inference as follow.

\begin{inference}
Give a finite weighted instance $G=(V,\tau,W)$. Let the partition $R$ on it with $R=(r_{i})_{i=1}^k$. Consider two endpoints $u, v$ with $u\in r_{1}$ and $\vert r_{1}\vert=1$, let $v$ in another region with $v\in r_{i}$ for $1<i\leq k$. For any shortest path $p$ from the source $u$ to target $v$, such that it is an optimal path from $u$ to $v$, if and only if all the weights on $G$ are equal.
\end{inference}

\begin{proof}
According to the above-given conditions, we let the shortest path $p=(x_{s})_{s=1}^{i}$ with $x_{1}=u$ and $x_{i}=v$. We can conclude its length has $i-1=\vert p\vert_{\in\tau}$ by the shortest path theorem of graph partition\cite{1}. It is obvious to compute a total weight for $p$ by above form\eqref{1}, having $w(p)=\sum_{1\leq s\leq i-1}w_{s}$. We let there is another connected path $p^{\prime}$ from source to target with $\vert p^{\prime}\vert=i^{\prime}\geq i$. It is obvious that if the whole weights are equal to a constant $c$, we can present their relationship like that $c(i^{\prime}-1)\geq c(i - 1)$, or else we can always tick distinct weights to turn out a converse unless there is only path $p$ between them. Hence this inference holds. 

\end{proof}
~\newline
This inference showcases a fact that set $R_{EBOGP}$ not only is the collection of shortest path and, but for likewise it is the collection of all optimal paths too, if and only if whole weights are an identical constant. Furthermore the feature implies that the perfect result most possibly lays in the set $R_{EBOGP}$ if the polygon inequality relationship is a popular law in figure. The case in this inference may be rather extreme, so we can regard it as a boundary on the $density$ and $cardinality$, as two key criteria to be the characteristics of data structure of those weights in a complex system. For example the density of all weights is so tiny that they are approaching to equal, thus the density may decide the outcome in a sense way, beside cardinality. 

Meanwhile beside the Bellmanford\textquoteright s algorithm, its exploring is just stage by stage along the levels in instance likes hybrid approach does among regions. Certainly the Bellmanford\textquoteright s algorithm need a group of crafted labels to tick those endpoints to guarantee operation does at the correct orbits, so as to this request complicates the job of encoding and it is hard to seek out the inside and fatal logical error. But for hybrid method, this burdensome job has been dismissed. This is the reason that why author called this new approach hybrid, which it is a synthesis of those advance ideas in pioneers, any way to avoid their various drawbacks since clear geometry layout.

Any way we need have further to prove the statement that this hybrid method could produce a path and it would be the optimal result in collection $R_{EBOGP}$ if and only if entire weights are of fixed and the their numbers are arbitrary\footnote{Please note that: here is not to say the result is the optimal to set $R_{BOTS}$. About this issue, we will sound it in later section}.

\begin{theorem}
Give a finite weighted instance $G=(V,\tau,W)$. We let a graph partition $R=(r_{i})_{i=1}^k$ on it. Consider an endpoint $u$ being a source in the first region which is a singleton one. Let an endpoint $v$ in other $i$\emph{th} region as target. If the hybrid approach is used to explore the path among $u$ and $v$, then the method merely yields a shortest path among them.
\end{theorem}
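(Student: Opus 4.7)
The plan is to show that following the parent pointers $P$ produced by the hybrid algorithm backward from the target $v$ traces out a path whose arc-count matches the shortest-path distance asserted by the shortest path theorem cited earlier in the paper. Once that is established, the emitted path is by definition a shortest path.

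First I would confirm that the algorithm always sets a non-trivial parent for every node outside the singleton source region. Graph partition property 3 guarantees that for any $v \in r_j$ with $j \geq 2$ there is some $x \in r_{j-1}$ with $(x, v) \in \tau$, so $x \in L_{\in\beta}(v)$ and $T[x] = j-1 < j = T[v]$. Hence the conditional at line 004 of the hybrid algorithm admits at least one candidate, and $P[v]$ is well-defined and non-zero.

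Next I would pin $P[v]$ to the immediately preceding region $r_{j-1}$. The algorithm enforces $T[P[v]] < T[v]$, so $P[v] \in r_\ell$ for some $\ell < j$. If $\ell < j-1$, the shortest path theorem would provide a path from $u$ to $P[v]$ of length $\ell-1$; appending the arc $(P[v], v)$ would then produce a path from $u$ to $v$ of length $\ell < j-1$, contradicting the theorem's assertion that every shortest path to a node of $r_j$ has length exactly $j-1$. Therefore $\ell = j-1$. The conclusion then follows by tracing the parent chain $v = v_0, v_1 = P[v_0], v_2 = P[v_1], \ldots$: each step strictly decreases the region index by exactly one, so after $i-1$ steps the chain lands in $r_1 = \{u\}$ and halts, yielding a path from $u$ to $v$ of precisely $i-1$ arcs, which is a shortest path.

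The main obstacle I expect is the pinning step $P[v]\in r_{j-1}$, since it leans on the shortest path theorem being strong enough to guarantee that no endpoint of $r_\ell$ is reachable from $u$ in fewer than $\ell-1$ arcs. Granted that formulation, which is implicit in the BFS-like layering of the partition construction, the remainder reduces to a clean induction on the region index and requires no assumption about the weights in $W$, consistent with the statement only claiming shortest (minimum cardinality), not weight-optimal, output.
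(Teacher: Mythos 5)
Your proposal is correct and follows essentially the same route as the paper's own proof: both trace the parent array $P$ backward from the target, argue that the chain must terminate at the source after exactly $i-1$ region-decreasing steps, and conclude shortestness from the shortest path theorem of the partition. Your version is in fact tighter where it matters — the explicit pinning of $P[v]$ to $r_{j-1}$ (rather than merely some earlier region) is glossed over in the paper, which instead spends its effort on a contradiction argument that the chain cannot stall at an intermediate node and on the uniqueness of the traced path, a point your argument gets for free since $P$ is single-valued.
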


\begin{proof}
Link to the above precondition, for the hybrid algorithm, the operation might pick up a leaf $x\in L_{\in\beta}(v)$ for root $v$ underlying visiting structure $\beta$, if and only if the leaf make the total weight $w_{v}$ at root $v$ is less than other leaves. In the above program, we let an array P to gather the leaf $x$ for root $v$ as output, which data structure is presenting P$[v]=x$. The data structure shows a logical relation: there is only an arc link to $v$ from $x$, so that we are certainly ongoing to conclude the fact that the endpoint $x$ similarly has iteratively got an arc reflect to itself from a fore-leaf, till to end at the source $u$ by the precondition for first region is the singleton. Meanwhile we can counter the length of the tracing stage is $i-1$. So this tracing process with a recursive feature has a virtual inverse link to source $u$ from target $v$, at where we can assume that the virtual link is connected to an endpoint $t$ in $j$th region with $j>1$ and stops there. As well when we query root $t$ for its own fore-leaf in the array P, we might have the contradiction that there is a fore-leaf for root $t$ stay in $(j-1)$th region recording in array P, not coincide to our above assumption. Hence we prove therein outcome, at least there is a path connect target from source. Actually we also prove all endpoints are targets in addition to source, certainly the source either may be the empty-node\textquoteright s target. 

We assume a case again that there is another path $p^{\prime}$ from source to target in the cutting graph which has been depictured by array P.  Any way the case means either target or a medium endpoint should be the common node on fork about two paths $p\text{ and }p^{\prime}$, so that it always lets this common node have two fore-leaves in array P, and contradict, hence the case does not exist. We prove that underlying this data structure amid outcomes, the hybrid method just produces a unique result for a given pairs source and target.

We can let a path length greater than $i-1$, which links to target by the later-leaf $y$. It is obvious that it violates the logical structure of P. Likewise to the endpoint $y$ as a root in array P, which the amount of fore-leaf is only equal to 1. Hence there is no such path in cutting graph and we prove the only one result has to be the shortest path, this theorem holds.

\end{proof}
~\newline
By the above theorem, we can learn the fact that the data array P figures out the new cutting figure as a tree, where the tree root is source \footnote{But it is not a minimum span tree, that need you note.}. Any way, we have proved the notion  $R_{HDA}\subseteq R_{EBOGP}$, that the hybrid approach can make only one shortest path for a source to whole others. So we have to answer the question: whether the result made by hybrid algorithm is the optimal one among those opponents in set $R_{EBOGP}$. The conclusion and its proof are in following.

\begin{theorem}
Give a finite weighted instance $G=(V,\tau,W)$. We let a graph partition $R=(r_{i})_{i=1}^k$ on it. Consider an endpoint $u$ being a source in the first region which is a singleton one. Let an endpoint $v$ in other $i$\emph{th} region. If the shortest path collection $R_{EBOGP} (u,v)$ is non-empty, then the result $p$ made by hybrid approach has the minimum total weight in it.
\end{theorem}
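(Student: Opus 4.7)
The plan is to prove optimality by strong induction on the region index $i$ of the target $v$, with the joint claim that $\Sigma[v]$ computed by the hybrid algorithm equals the minimum total weight $W^{*}(u,v)$ over all shortest paths in $R_{EBOGP}(u,v)$, and that the path reconstructed by following the array $P$ backwards from $v$ realises this minimum. The base case $i=1$ is immediate: $r_{1}=\{u\}$, $\Sigma[u]$ is the neutral initial constant, and the only member of $R_{EBOGP}(u,u)$ is the empty path.

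For the inductive step I would first establish a \emph{decomposition lemma}: any $p\in R_{EBOGP}(u,v)$ with $v\in r_{i}$ has length exactly $i-1$ by the shortest path theorem cited from \cite{1}; writing $p=p'\cdot (x,v)$, the prefix $p'$ ends at some $x\in L_{\in\beta}(v)$ and has length $i-2$. Because the partition is built level-by-level from $u$, the distance from $u$ to any predecessor of a node in $r_{i}$ is at least $i-2$, which forces $x\in r_{i-1}$; and $p'$, having length $i-2$, is itself a shortest path, so $p'\in R_{EBOGP}(u,x)$. Conversely, concatenating any $p'\in R_{EBOGP}(u,x)$ with an arc $(x,v)$ for $x\in L_{\in\beta}(v)\cap r_{i-1}$ produces an element of $R_{EBOGP}(u,v)$.

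Combining this decomposition with additivity of total weight under concatenation yields
\[
W^{*}(u,v)\;=\;\min_{x\in L_{\in\beta}(v)\cap r_{i-1}}\bigl(W^{*}(u,x)+W[x][v]\bigr),
\]
and by the inductive hypothesis $W^{*}(u,x)=\Sigma[x]$. The inner loop at lines 003--006 of the hybrid algorithm iterates over leaves $x\in L_{\in\beta}(v)$ satisfying $T[x]<T[v]=i$, and the same level argument shows every such $x$ in fact has $T[x]=i-1$, so the competitive minimum recorded at line 005 equals $\min_{x\in L_{\in\beta}(v)\cap r_{i-1}}(\Sigma[x]+W[x][v])$, which matches $W^{*}(u,v)$ exactly. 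The $P$-trace realises this value because $P[v]$ stores the minimising $x^{*}$ and, by the inductive hypothesis, the $P$-trace from $x^{*}$ realises $W^{*}(u,x^{*})$.

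The principal obstacle is the side lemma stating that no arc of $G$ skips from $r_{j}$ with $j<i-1$ straight into $r_{i}$; without it the condition $T[x]<T[v]$ in line 004 would tolerate a candidate $x$ from a deeper fore-region, and the algorithm could output a smaller total weight realised along a non-shortest path, which would sit outside $R_{EBOGP}(u,v)$ and break the statement as written. I would derive this either from the construction of the graph partition algorithm (a leaf enters $r_{i}$ only at the first iteration in which a predecessor has been placed in $r_{i-1}$) or directly from the shortest path theorem of \cite{1}. A secondary technical point is to pin down the initialisation of $\Sigma[u]$ and the behaviour of the $y=0$ guard at line 004, so that the inductive base is airtight and the inductive minimum is never vacuous whenever $R_{EBOGP}(u,v)$ is non-empty.
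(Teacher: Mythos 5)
Your proposal is correct and follows essentially the same route as the paper: an induction over the region index, in which the competitive minimum of form~\eqref{2} propagates optimal subpath weights from $r_{i-1}$ into $r_{i}$, with the base case at the second region where the source is the only fore-leaf. The paper's own proof is only a sketch of this induction, so your explicit decomposition lemma (every shortest path to $v\in r_{i}$ factors through a shortest path to some $x\in L_{\in\beta}(v)\cap r_{i-1}$) and your no-level-skipping argument (that the guard $T[x]<T[v]$ together with the partition construction forces $x\in r_{i-1}$) supply precisely the details the paper leaves implicit.
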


\begin{proof}
To those above-conditions, we let path $p$ be made by hybrid approach from $u$ to $v$. It is obvious for the conclusion to naturally hold, if and only if the cardinality of set  $R_{EBOGP} (u,v)$ is only one. So we let it greater than 1. Consider the target in the second region, that it has not any choice but for source as only one fore-leaf. Hence the conclusion might hold in the fore context. 

Consider there are $i\geq 2$, and $\vert R_{EBOGP} (u,v)\vert \geq 2$, By the competitive mechanism described by form\eqref{2}, for those endpoints in $3$rd region, we can prove their total weight recorded in array $\Sigma$ are minimum among their own fore-leaves respectively. It is obvious to employ the \emph{induction principle} to prove the case similarly appears at target $v$ iteratively.

By theorem 1 we prove there is only one shortest path between source and target, which lays in array P, hence this theorem holds.  

\end{proof}
~\newline
This theorem 2 eventually proves the new hybrid method is an exact solution to reveal optimal path for static weighted figure under the wing of polygon inequality. But in the real world, always there is someone to draw out a path on instance to make a counter example. Likewise, this also is not a hard thing for you to draw a \emph{Hamiltonian path} and merely tick minimum weights on each segment to debate this method.

This absolutely boring path issues a question: if we regard a complex system as that there a mechanism stochastically assigns some numbers to whole arcs, then we might along this idea to study this complex system how about the stochastic probability to anti polygon inequality, we here call it \emph{anti event} simply.

So that we always might imagine a complex system like that there has a number array used to assign its own members to a figure. At the intuitive notion, we at least deem the anti event is involving to the number array and those weights distribution on the figure. For to clarify the problem, we issue an example.
\\~\newline
\textbf{Example.} Given a natural number array $A=\{a_{i}\}_{i=1}^{n}\colon a_{i}=i$ for assignment and consider a polygon $\Delta =(e_{i})_{i=1}^{t\geq 3}$. We firstly let $t=3$, so that there are these operations: while we select three numbers $x_1,x_2,y$ from array $A$ and assign them to those edges $e_1,e_2,e_3$ respectively on polygon $\Delta$ as weights, if there is $x_1+x_2< y$ , we would deem the anti event inevitably happens, here we regard the edge $e_3$ with number $y$ as the shortest path.

For the variable $y$, its domain has to be in $[2,n]$, due to the max number in array $A$ is $n$. And since that, beside the two other items on left side of above inequality, such that the number $n/2$ is a label for their domains, which means they might not both likewise equal to and greater than this label; if so, their sum would greater than $n$ so as to the anti event should never appear. We consider the case $x_1\neq x_2$ primarily. Along the fore line, variable $x_1$ has such domain $x_1=[1,(n-1)/2]$, since that followed there may be domains and relationships among three variables respectively in following.
\begin{align*}
 x_1=i&\Longrightarrow x_2=[i+1,n-i-1] &\text{ for }i=[1,n/2]\\
~&\Longrightarrow y=[i+x_2+1,n].\quad &\quad
\end{align*}
~\newline
Then the total possibility for $x_1$  equal to a certain volume $i$ will be

\begin{align*}
p(i)&=\sum_{x=i+1}^{N-1}n-(i+x+1)+1 &\text{for }N=n-i;\\
&=\sum_{x=i+1}^{N-1}N-x. &\quad
\end{align*}
~\newline
Therefore the sum of total possibility for $i=[1,n/2]$ is

\[P=\sum_ {i=1}^{T}p(i)=\sum_{i=1}^{T}\sum_{x=i+1}^{N-1}N-x\quad\text{for }T=(n-1)/2,~N=n-i.\]

~\newline
We have

\begin{align*}
P&=p(1)+p(2)+\cdots+p(T)\\
&=(n-3)+(n-4)+\cdots+1+\\
&\quad ~ (n-5)+(n-6)+\cdots+1+\\
&\quad\quad \vdots\\
&\quad~ +1
\end{align*}

~\newline
Finally the succinct form is

\[P=\frac{ (n-2)(n-3)}{2}+\frac{(n-3)(n-4)}{2}+\cdots+1.\]

~\newline
While the form is divided by combinational form $S(n,3)$ we have

\begin{equation}\label{3}
\begin{split}
\rho &= \frac{P}{C_n^3}=3\left(\frac{(n-2)(n-3)}{n(n-1)(n-2)}+\frac{(n-3)(n-4)}{n(n-1)(n-2)}+\cdots+\frac{1}{P_n^3}\right)\\
~&=\frac{3}{n}\varphi\\
\text{ for }&\varphi=\left(\frac{(n-2)(n-3)}{(n-1)(n-2)}+\frac{(n-3)(n-4)}{(n-1)(n-2)}+\cdots+\frac{1}{P_{n-1}^2}\right).
\end{split}
\end{equation}                             
~\newline
For the above term, we consider the array $\varphi$ is convergence towards to 0 while the variable $n$ toward to infinite, we denote the term by $\rho^3$, so that we can deem the probability on the triangle surely tends to a constant, an extremal value. But we only finish the job to pick up three numbers. But while we plan to assign them to those edges on a triangle $\Delta$, the probability would be $\rho^3/3$ for assignment the greatest number $y$ to segment $e_3$. This example demonstrates that the probability of anti event either involves the quantity of those edges on polygon $\Delta$.

Second for the case $x_1=x_2$ such that having  $x_1+x_2=(2,4,\ldots,k)$ for $k=n-2$ if $n$ is an even number otherwise $k=n-1$. Then with the above same fashion we have
\[P^{\prime}=(n-2)+(n-4)+\cdots+1.\]
Then
\[\rho^{\prime}=\frac{P^{\prime}}{S(n,3)} \approx \frac{3}{n}.\]
~\newline
Of course while the $n$ tends toward to infinity, the probability $\rho^{\prime}$ is toward to 0. Thus here we might only interest the case of $x_1\neq x_2$, which is approaching to $0.48$, but for the case of assigning the greatest number amid three ones to edge $e_3$, that volume either becomes near to $0.17$ that $\rho^3$ divided by $S(3,1)$.

As well the simple conclusion should be in a strict context of choosing numbers from assignment array. It shows the case that the anti event either is not more popular than its opponent. But we have to recognize the fact that there is not any more theory about the probability of anti event. In general, the current theory also refers to the case only for one-to-many model, so that we inevitably suffer an embarrassing affair that we study or profile the problem is merely depending on through the intuitive \emph{sample space}.

We had a simulated trial with several arguments: a natural number array $A$; the letter $E$ presented the amount of edges on a polygon;  our experiment was designed to that we randomly got $E$ numbers from the array $A$ and randomly dispatched them to two groups to have their respective sums and cardinalities compare. According to the definition of anti event, we could identify whether the anti event happened between the two groups data. We would let amid one as a shortest path whose length was represented by its own cardinality. Our testing had its length to another longer path done from 1 to $E/2 - 1$ if $E$ was a even number, otherwise $(E-1)/2$. The testing was repeatedly done for 1\,000 times, where for every time, we counter each anti event on distinct cardinality of the shortest group. Eventually we gained a group of volumes upon various lengths to respective percentages. In practical trial, we initialize them as $A=1,2,\ldots,170$ and $E=169$. We plotted the histogram to show the outcome in following. 

\begin{center}
\includegraphics[height=40mm]{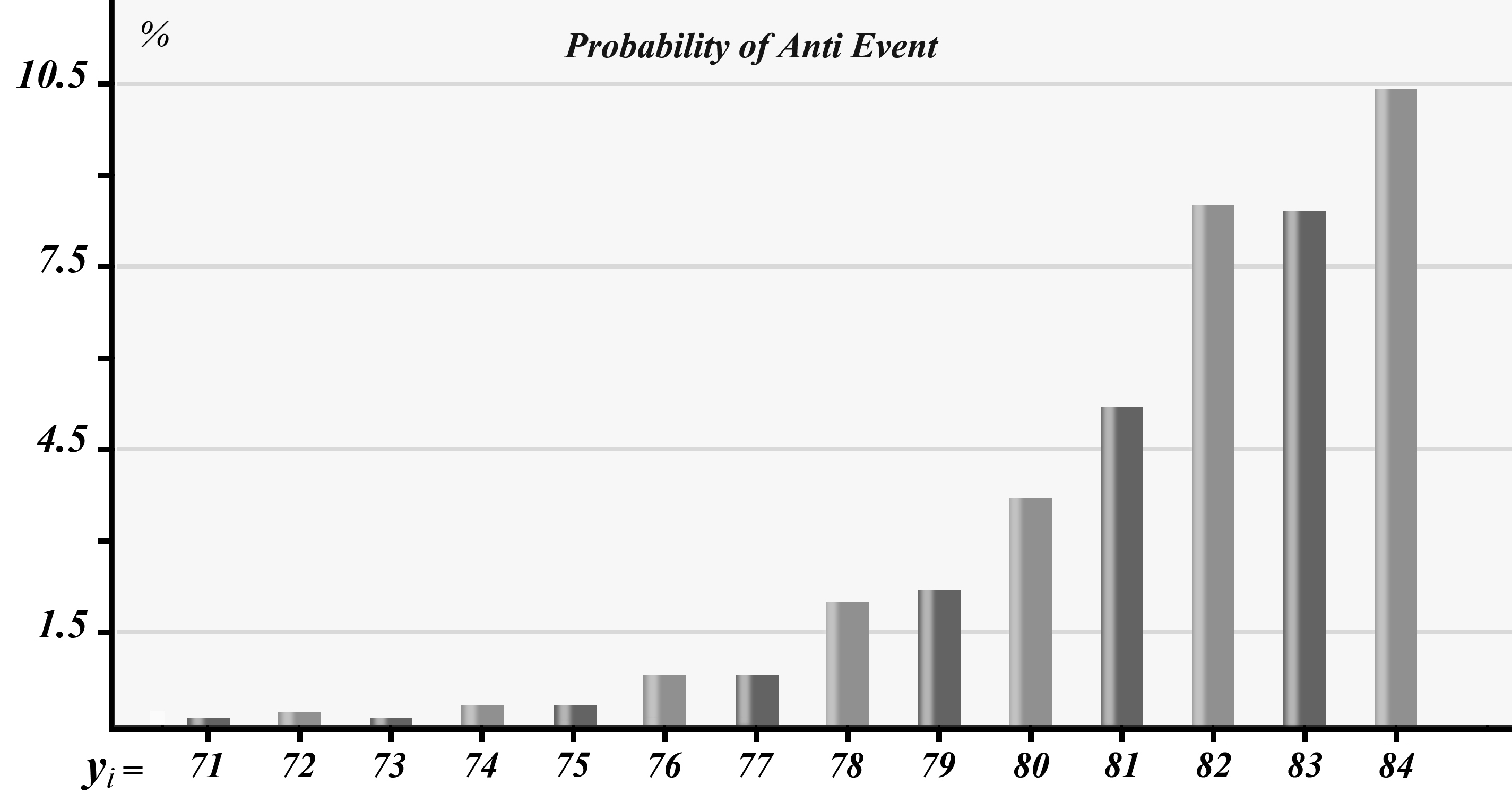}\\
Figure 3: Polygon Inequality Probability
\end{center}
~\newline
For the above diagram, when the length exceeded 70, the anti event began to appear. Namely the anti event is a small probability case while the length of simulated shortest path was less than 70. But for the length being 84 near to half of 169, the probability was near to 11\%. From this trial, we might gain a knowing about what the anti event appears in a high frequency is mostly to involve the case, which two cardinalities of groups are near to each other.

If we imagine the figure in practice is constructed by many distinct polygons, we might deem the fact that to a shortest path made by hybrid method, the anti event easily happens due to these paths with an approaching length. Thus we have to interest the severity about the influence of anti event to accuracy and seek a way to solve it, else in the application, the top probability of anti event in this trial is 10.4\% less than the triangle with three edges. So we want to learn the relationship among the probability of anti event to the scale increasing on a weighted figure. In following portion, we will have along on the tendency of the fore idea to do another experiment.\\
\newline
\textbf{Greedy Idea Trials.} We designed a trial with a seriese objects by using greedy idea attempt to remove the anti event. We chose a group of grid instances as objects for our trial, that is $k=5,10,15,\ldots,50$, which endpoints quantity $n$ was in range $[25,2500]$ as to form $n=k^2$. We let these members in array $A$ be randomly assigned to each arc on figure as a fixed weight. We might let $A=(1,2,3,\ldots,k)$ , certainly the cardinality of array $A$ was $k$ and, for this assignment operation we call \emph{regular assignment}. 

We designed an algorithm that we sorted those arcs\textquoteright~weights as a queue $w$ in the ascent. Then after, we initialized an empty instance $g$ without any arc or endpoint. We took some arcs off for once to draw on the instance $g$, which their weights are in the front range of positions in array $w$. This progress was iteratively to cut the original figure to build up the new figure $g$ again and again till a condition was satisfied after we finished drawing a strain arcs: the new partition $\sigma\subseteq g$ contains the source and target both. Of course the case presents source connect with the target. Then we performed the hybrid method to search the optimal path for source and target. Through this method, we wanted to screen some arcs out with less weight to piece up a proper figure, that guided by greed idea, we always deem the optimal path should be organized by those $nice$ arcs and in most probability, it is properly laying in such instance likes $\sigma$. 

\begin{center}
\includegraphics[height=25mm]{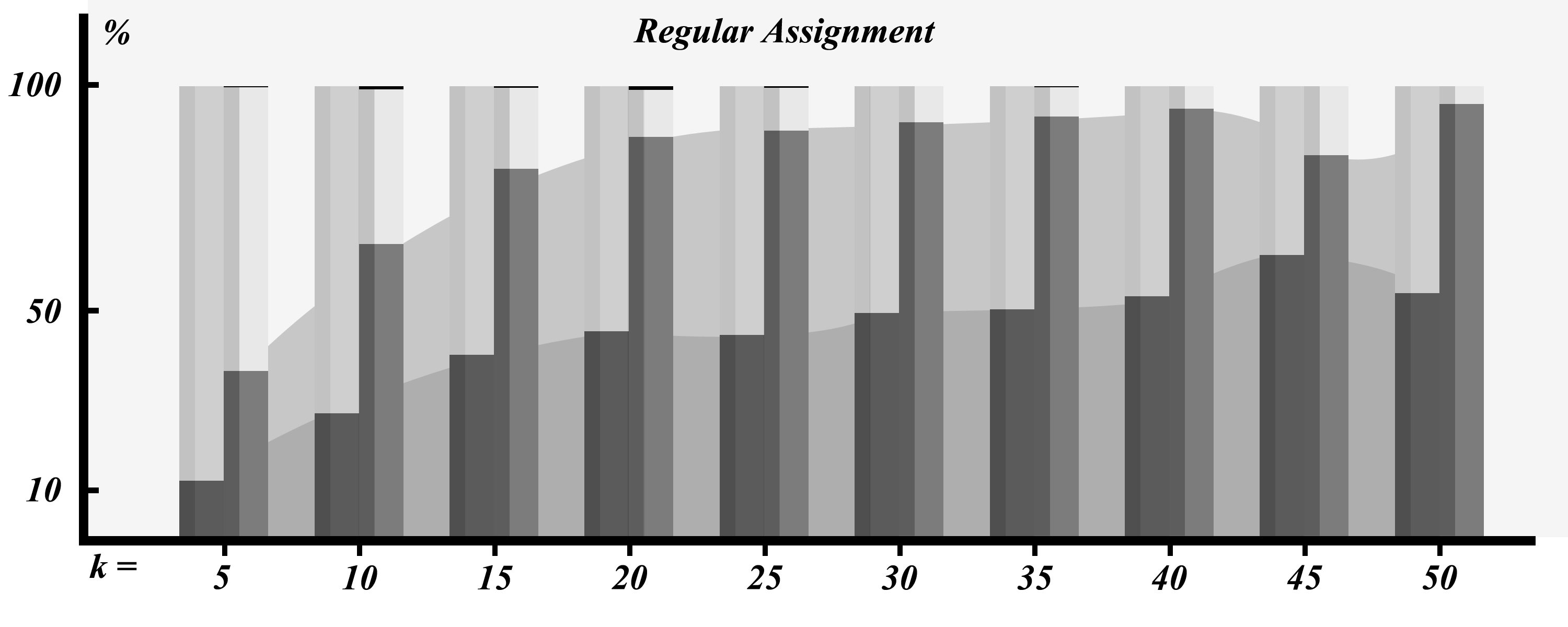}
\includegraphics[height=25mm]{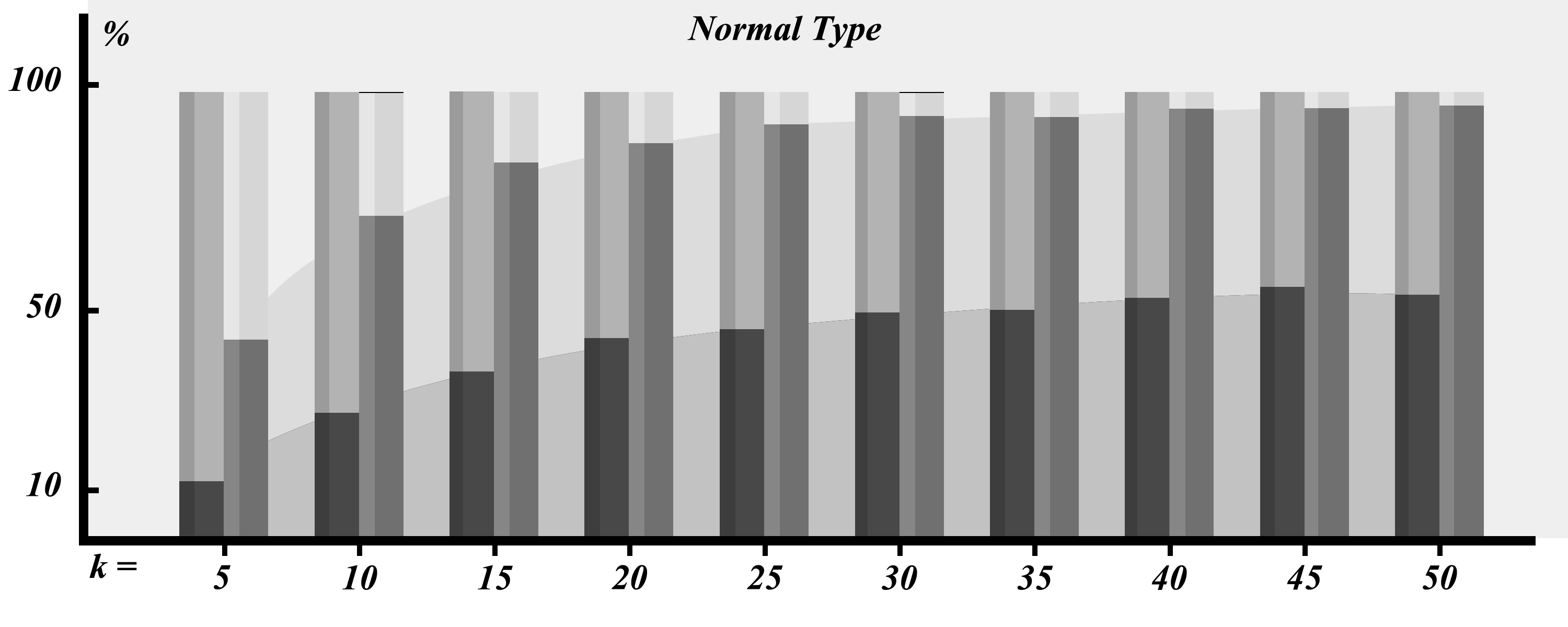}\\
Figure 4th is regular, 5th is nornal.
\end{center}
~\newline
We compared the outcomes produced by the two approaches in regular assignment. We briefly call the new approach \emph{greedy hybrid approach}. The above histogram visually demonstrates the relationships among those outcomes with those gray pillars. For a pair pillars on each number at the horizon, the \emph{left} pillar with deep-high light grades is showing the contrast about two paths\textquoteright~lengths made by two approaches respectively, which the deep gray area means the rating of greedy one longer than the hybrids, or else the high light area says the equal relation. This expression either suits to the right pillar for about the comparison of two total weights at target. But yet note the tiny black area, it is showing the less than relation, where the greedy one is winner. The outcomes were upon testing 1\,000 times for each volume on horizon.

The trial issues the case that although we strived to filter some greater weights, but the utility did not mend even more, which on above diagram the rating changing has shown the outcome off. This solution depending on through each arc to restrain the anti event happen is obvious in failure, at least said on grid figure. Maybe for the cardinality of set $R_{EBOGP}$ could be $S(2k-2,k-1)$ for variable $k$, the outcomes present a factorial increasing might take the probability of anti event fall down. At least author deems through this trial, this case shows the anti event as a critical key impact to take a decision in failure on employing greed method.  

\begin{center}
\includegraphics[height=25mm]{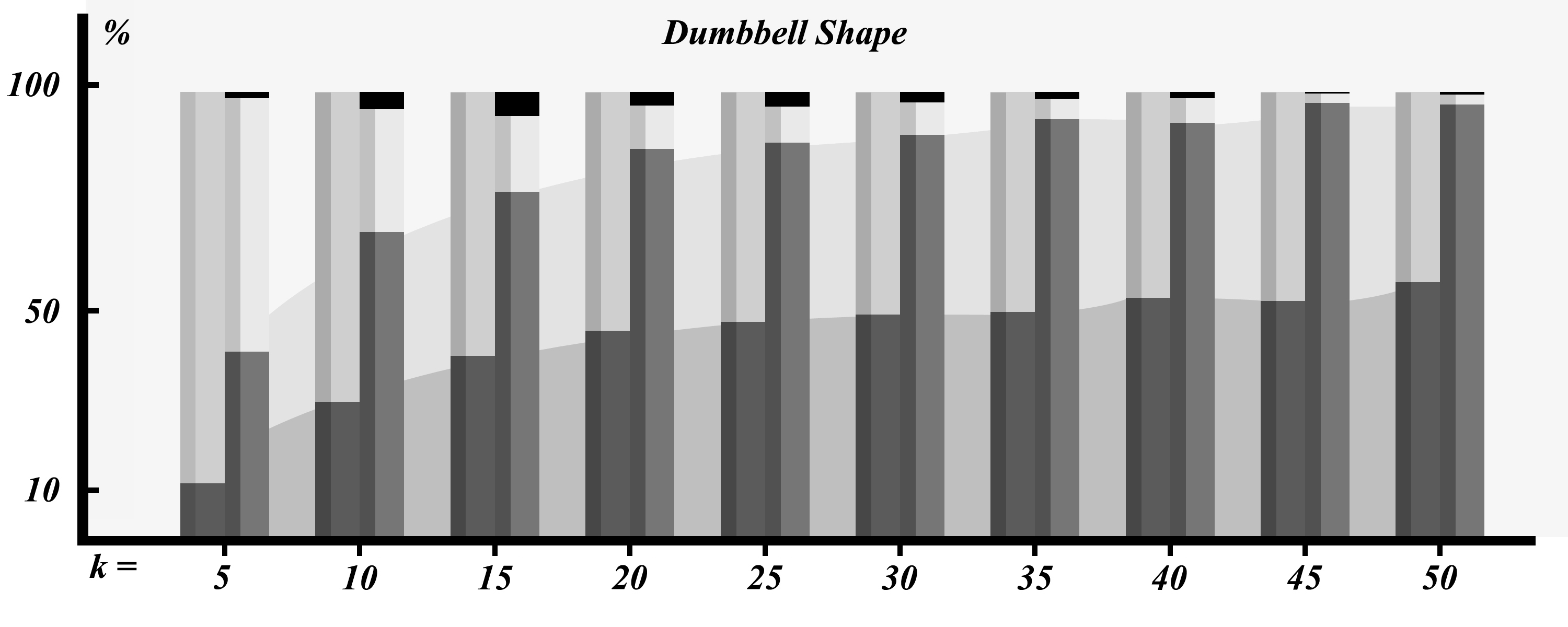}
\includegraphics[height=25mm]{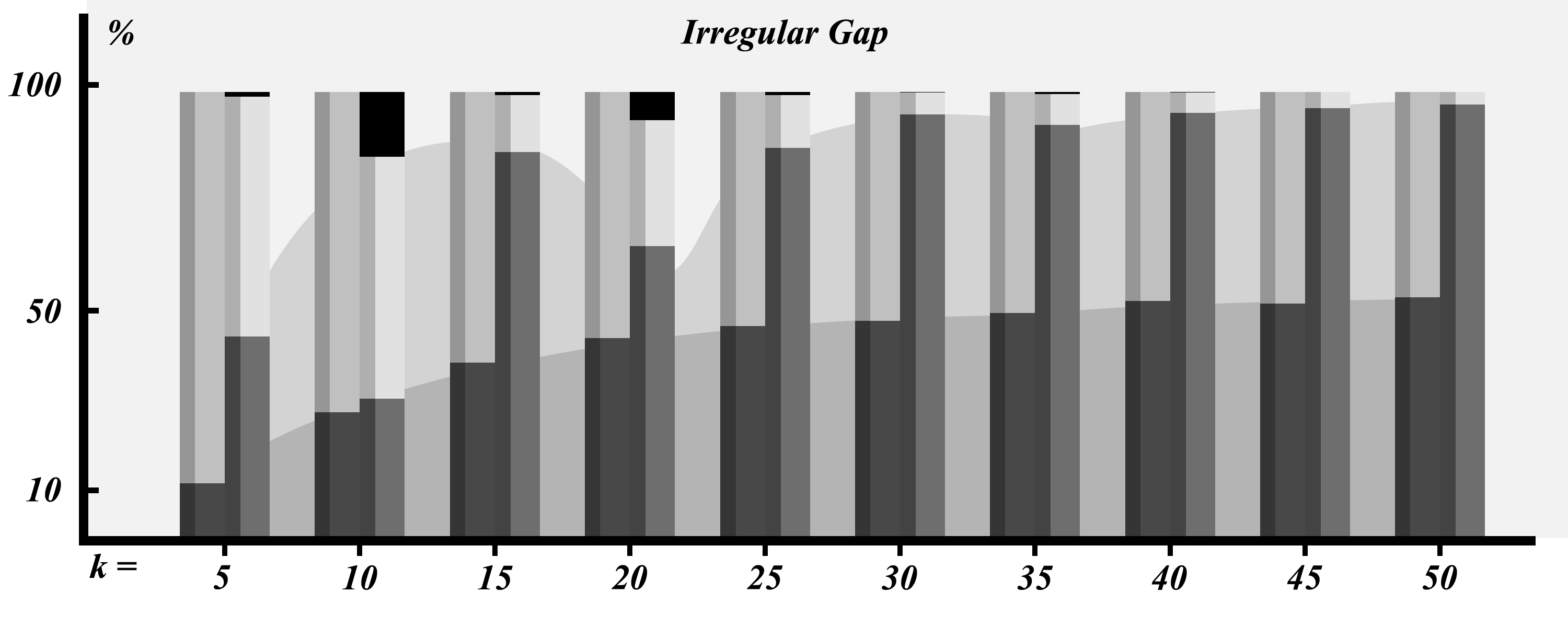}\\
Figure 6,7 
\end{center}

~\newline
We had continued to do some experiments by altering density in assignment array. They  respectively were normal type, dumbbell-shape type and the gap being random. Those histograms are listed above, and they are commonly showing the greedy approach cannot be in leading to the hybrid one, however on length or accuracy. Eventually we can say using greedy method on the side of entry might mend the accuracy in a small probability.\\
\newline
\textbf{Summary.} We played our strategy on two stages, and eventually we gained a unique outcome which is not only shortest path but also an optimal one among those members in set $R_{EBOGP}$. We might issue the strict context of this method, that under the circumstance of with least cardinality of nodes, the method can achieves an ergodicity to satisfy an extreme condition of weighted request. Here we do not mention which an extreme condition. In fact, if we only alter the operation sign in above program, we can reform the exploring job to search the maximum total weight. In other words, we hold the selection criteria so as to the word optimal has more popular meanings. We ever use the greedy method in our solution, but we only seek out the drawback of greed idea in our strategy. About the problem of seeking out an extreme accuracy, we will issue the involving contents in later section.

Since that, through our above conclusion, does it mean the EBOGP approach might be ignored? Here we only discussed the case about fixed weighted figure, although there numbers are randomly dispatched to each arc. If we use the hybrid approach in a dynamical system, we would suffer the problem that how to exam the only one result is what, even to determining whether it is an acceptable volume or not. At least author deems this is a massive embarrassment that the prior methods suffered. Eventually they had not any agency to trial and avail their results, so as to there were any more choices, and they had tragic to stop at the static system for decades.

\section{Exploing in Dynamical System}
Dynamical system\cite{8} is an indefinite thing hard to be figured out by natural language or mathematical language entirely. Here we will not dabble in much more at the hand of definition or attribute. We will define a simple dynamical system, that for each weighted arc $(u,v)$,  on it the weight is determined by the form $w(u,v)=f(v,I)$, which means any weight on each arc is produced by function, no longer in fixed status and there are many possible values. So we can design the variable $I$ might be the sum of each fixed index on respective arc. At least if we use a group numbers to assign these index and weight table randomly, we might make such uncertain factor for each weight: although the volumes of index and total weight will increase along with the path length growth. But on the concrete case of querying weight table for a weight with an arc as amid entry, to a target and its own fore-leaf, although it is same leaf, but for some distinct paths stop there, such that we cannot confirm the queried outcomes equal or unequal about their respective total index.

\begin{definition}
For the total weight on endpoint t might be recursively claimed by following forms.
\begin{flushleft}
\begin{enumerate} 
\item $I_{t}=c_{1}, w_{t}=c_{2}$; \emph{if $t$ is the source, the index and weight are constants.}
\item $I_{t}=I_{s} + I_{(s,t)}$ \emph{for } $I_{(s,t)}\in D_{\in\tau}$ \emph{and} $(s,t)\in\tau$;
\item $w_{t}=w_{x} + f(t,I_t )$ \emph{for} $x\in L_{\in\beta}(t)$ \emph{and} $f\in W$;
\end{enumerate}
\end{flushleft}
\end{definition}
~\newline
This definition demonstrates how to gain a dynamical weight on any arc. Here actually we just alter the name and utility of weight table which was in above static system, instead of storage the fixed indices for each arc. The progress merely adds an operation for query the weight table than in static system. So the runtime complexity about method should be unchanged.

Beside the hybrid approach, while we have the shortest path set $R_{EBOGP}$, we can compute all paths in it and seek out the optimal one. Instead, we cannot do that because there may have a request about computed resource at factorial increasing. On the other hand, at least on recent, the grid figure is the only one whom we might learn the amount of shortest path for given any pair of source and target on this type instance, through combinational theory. Not also mention those weird instances, we definitely cannot value the scale of outcome by a certain method before start our computed task. That is why author always used this type figure, although the EBOGP is a naive approach. Therefore at least on a small sample, we can utilize trials to exam the hybrid approach and attempt to reveal some interior features on the complex system, that sample space also is a critical importance mathematical method. This is just the reason that the accuracy made by EBOGP has absolutely exact, so that we have comparative trial about upon two methods to study algorithms and complex system.\\
\newline
\textbf{Experiment Design.} We let two number arrays $d$ and $w$ as assignment array for index table $D$ and weight table $W$ respectively, which all numbers in them are positive greater than 0. We use two arrays to do stochastic assignment job. According to the above definition about dynamic, we design a plan for trial: at first we will test the hybrid approach in such dynamical system, and then after to perform the EBOGP approach to produce the set $R_{EBOGP}$ and for whole shortest paths to cast their own respective total weights. As well we would sort these outcomes as ascent and remove those repeated numbers out, so that we may produce an ordinal sequence $X$ with $X=(x_{i})_{i=1}^N$ for $i<j\Rightarrow x_{i} <x_{j}$. Then we can observe the rating of hybrid result in set $X$, the position in sequence $X$. At the next stage, what we need further to do is recording the information about those ratings for study accuracy and system. There are several arguments in following, and they need to be in our attention.

\begin{flushleft}
\begin{enumerate} 
\item The variable $k$, here we let $k=10$ and fix it.
\item Let assignment array $d=\{d_i \}_{i=1}^{k}\colon d_{i}=i$.
\item The assignment array $w=\{w_i\}_{i=1}^N\colon w_{i}=i$ has a \emph{variance} cardinality. 
\item The $1^{\emph{st}}$ outcome is the \emph{top number}, which is the statistic percentage of hybrid result being on the $1^{\emph{st}}$ position after T times testing.
\item The 2nd outcome is the \emph{worst position} to label the greatest position therein T results.
\item The 3rd outcome is the \emph{average status}. After T times testing, we can gather two groups data. We let the first group $\alpha$ contains T volumes of position made by hybrid method, likewise to second group $\beta$, each member is the amount of sorted total weights made by EBOGPs for each testing. Therefore we can have the ratio about their average as $\delta = \overline{\alpha}/\overline{\beta}$.
\end{enumerate}
\end{flushleft}
~\newline
We plotted those arguments in following diagram.

\begin{center}
\includegraphics[height=50mm]{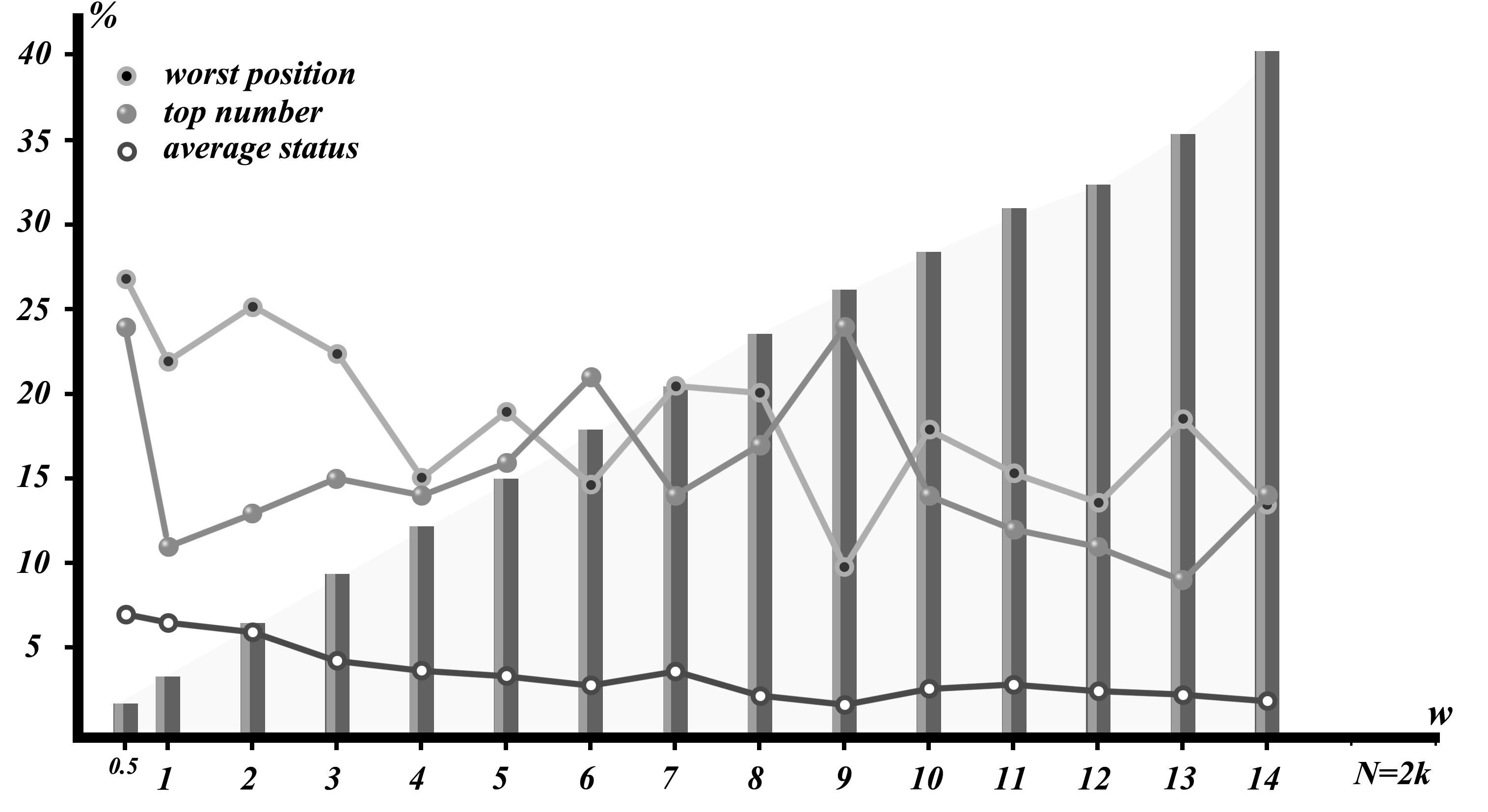}\\
Figure 8
\end{center}
~\newline
On the $x$-axis, a unit is $2k$ to present the rise of cardinality for array $w$. And these gray pillars show the $\overline{\beta}$ values upon distinct cardinality of array $w$, indeed, we might observe it increases according to linear along with the number on horizon. Instead, for three broken lines of worst position, top number and average status, as if there seems to have no more relationships among them. But here are some interesting things enough to encourage us: for that the worst position, at least it is less than 30\%. For general score system, we often use 5 rating to assess an case within \emph{excellent}, \emph{good}, \emph{ordinary}, \emph{bad} and \emph{worse}. There in the worst case the outcome is still to keep in good degree. Either to those members in $\alpha$ group, the data distribution was not a normal type as well, which the quantity of occupying the $1^{\emph{st}}$ position often was the most amid outcomes. Note the average status line presents a moderately drooping to horizon, for the case we can deem it represents a tendency that for the cardinality of weights in system toward to a large number maybe compensate certain drawback in hybrid progress, or the accuracy maybe ongoing in improvement along with weight cardinality rise. Of course, these notions need much more evidences to support. 

~\newline
\textbf{Change Density.} In above trials, we used a natural number array $w$ with the gap among neighbors being same and identical to 1, which we can say the density about array $w$ be \emph{even}.  As well we changed the density and let it greater than 1 and continued to test this case, so that we can further observe the gap how to influence the outcomes.

Herein we let array $w$ as $w=(w_{i})_{i=1}^{N}\colon w_{i}=c\cdot i$. We employed four distinct gaps on it and let $c$ be 1,2,3,4 respectively and only drew their average status lines. The diagram is in following.

\begin{center}
\includegraphics[height=45mm]{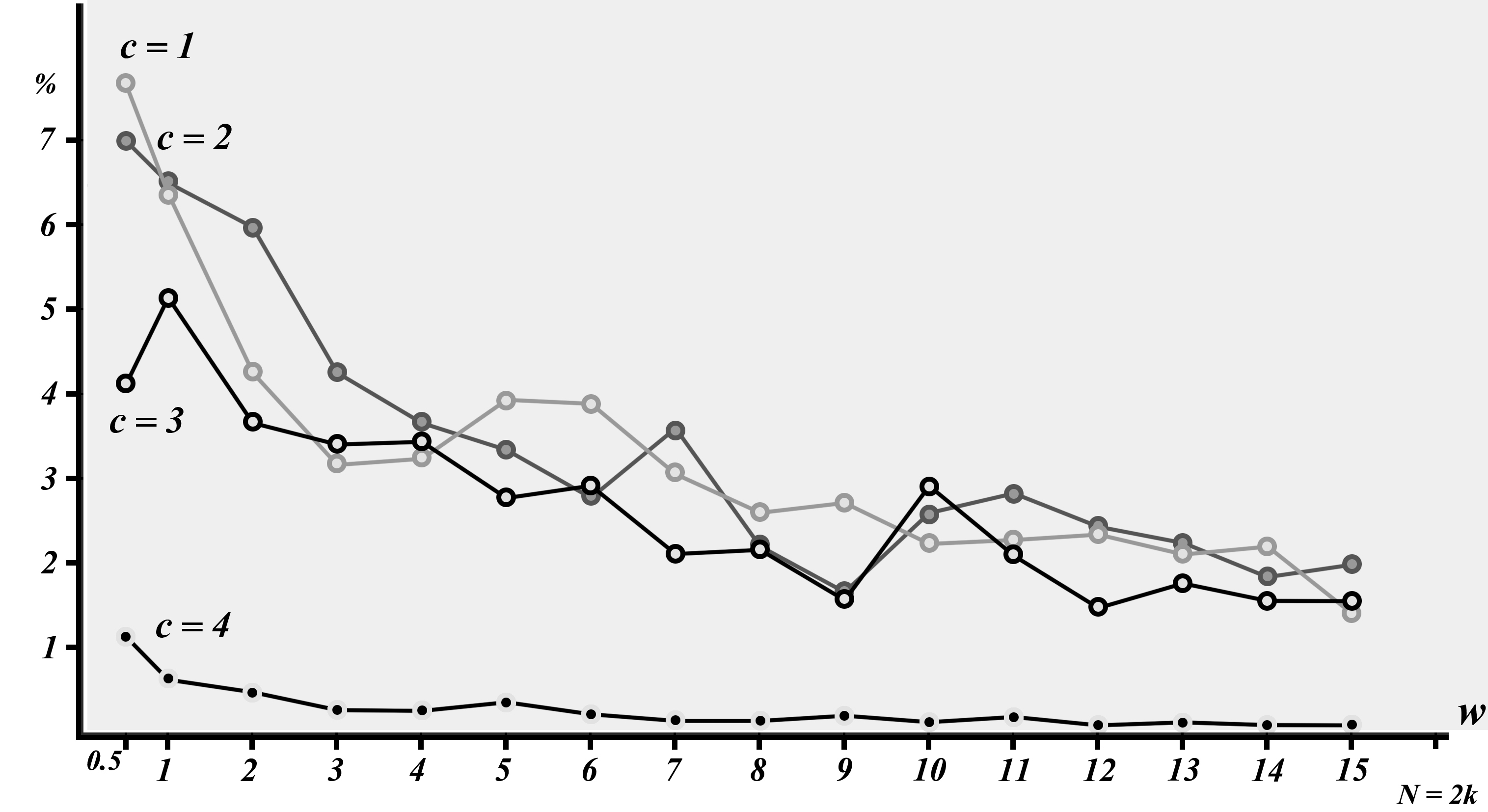}\\
Figure 9
\end{center}
~\newline
We can find that the case has extremely changed when the $c=4$. It shows much more off in plain and near to horizon than others, and there are many more results on top position. The case of enlarging gap alters our anticipation throughout that it made an approaching qualitative change. Particularly we let $c=5$, all results were on top position. This took us back to the static system and theorem 2 as well. Here we briefly call the case \emph{dynamical regularization}, it said that a dynamical system is presenting to a static system for calculation. 

For above case, a question has been emerged: is there any critical switch to let the instance turn to regularization upon hybrid method? We designed an exclusive trial to survey the case of dynamical regularization upon changing the density and cardinality of array $w$.

The testing object kept still. We let a number array $Y$ in front of the array $w$ and $Y=(y_{i})_{i=1}^{t}\colon y_{i}=i$ and it had a job to randomly assign its members to $w$, so as to take the gap in array $w$ to irregular. Afterword the program further randomly dispatched those numbers to weight table $W$ from array $w$. Therefore our trial turned out a relationship among two arguments, which they are the cardinalities about two arrays $Y$ and $w$ respectively. If each result be on top position for T times testing, we could have the ratio of $\eta =\vert Y\vert /\vert w\vert$, so that we plotted the broken lines about this ratio.

\begin{center}
\includegraphics[height=45mm]{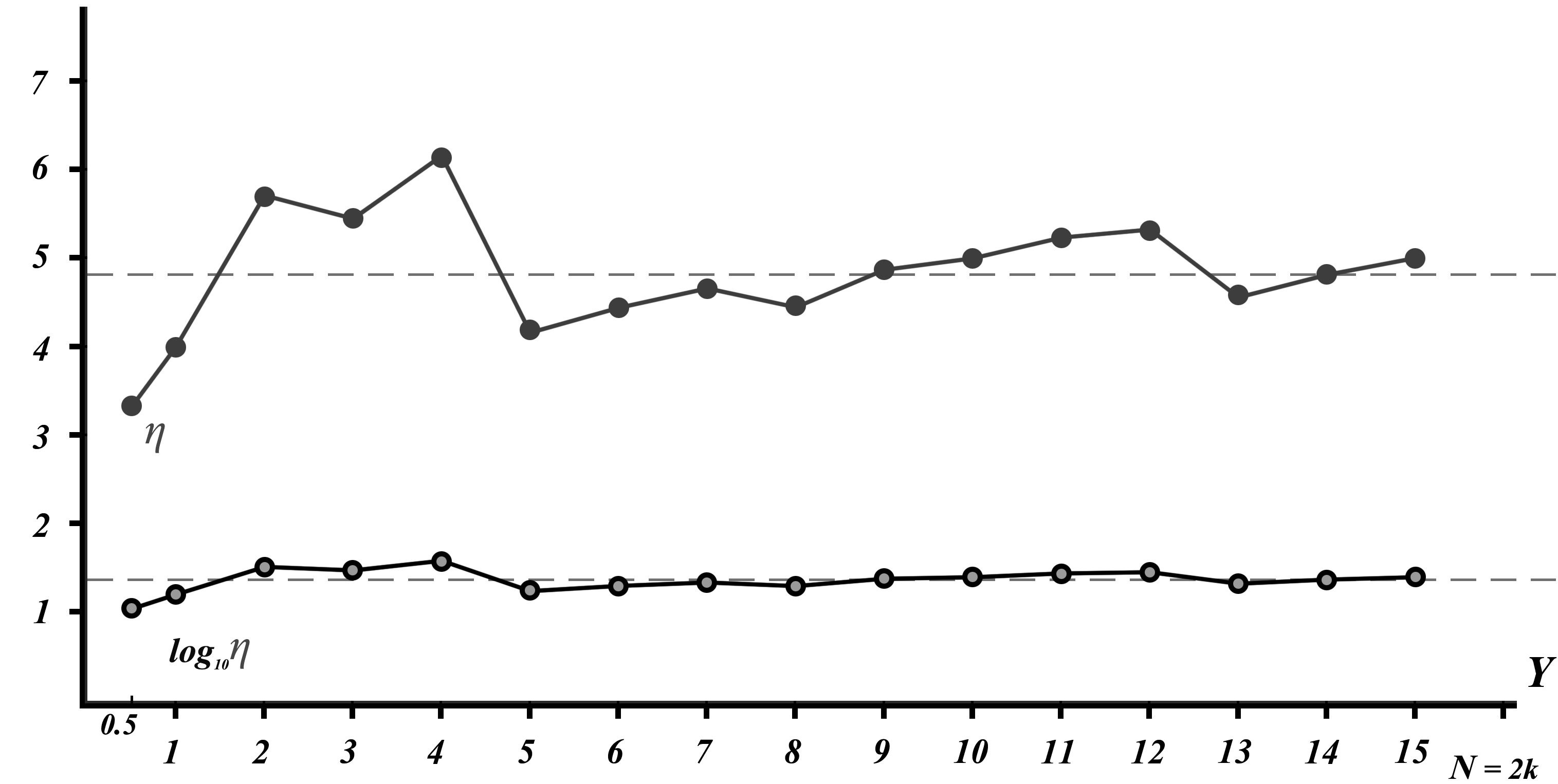}\\
Figure 10
\end{center}
~\newline
On the $x$-axis, the unit number is the cardinality about $Y$. the upper line is $\eta$ and the below one is $\log_{10}\eta$, that dealing can make the line to more plain. Two dash lines are their average horizon. Indeed, the diagram seems to demonstrate a critical argument for the instance, for argument $\eta$, we might use the form $(w_{max}- w_{min})/\vert w\vert$ instead of the above form to express the approaching meaning for cast it. That lets us might use a simple method to evaluate the accuracy made by the hybrid approach, moreover we can likewise utilize it to investigate the complexity of system. For example, we can survey a concrete instance and obtain a relationship about this argument and the data distribution in the system, so that we might gain an overall score in real time for govern system.\\
~\newline
\textbf{Discussion. }For above contents, we actually reported a group of trials to exhibit how to calculate a dynamical system. But for much more concrete instance, indeed we must have to announce that we merely exhibited an essential framework. For example, the whole weights in above dynamical system have been known, maybe in practice they are of uncertain. In this section, some arguments still are same as what were described in the static system, which contains the scale of data, the density and figure. Eventually, our research line was similar to partial derivative for some variables being frozen. The notion must be emphasized that the verification method we developed in the general sense takes our research work finish in the range of dynamical system. In particular, we even could not find out the case of dynamical regularization without the avail of EBOGP method, any way it is hard to imagine before, that in a chaos system we either obtained an outcome likes in static system.  

Turn back to the core theme, how to value the accuracy of outcomes made by hybrid method in a dynamical system. Frankly it said there is not any theories to support our trials and assumptions, thus our solution is just said that we stand at the highland of shortest path and work under a wing, where the case of polygon inequality relationship may be popular in system. On the other hand, it seems to be no worth for sought the absolutely optimal path in the dynamical system. Why? For the above example, any index does not mean the many of a weight on an arc which case we said uncertain. At a extreme notion, although we seek all paths out, the outcome is also possible not in exact because the uncertain factor maybe request the exploring job need include all trails which may contains circuits made of repeated paths. Hence the job becomes insignificance however for computing resources or task own. 

We must say that in last trial, it seems to divulge the dynamical system actually is a collection of static system with a temporal dimension. A question has been raised that is there a characteristics value in such dynamical system? If it exists, then we how to describe these relationships among density, the scale of data and the figure? Thus we still need to survey many more distinct instances for more evidences.

\section{Optimizing Accuracy}
We have issued the rule above that the accuracy is relative to anti event in static system. But it is obstacle that how the program detects whether the anti event happens and in where, however we cannot perform the BOTS to demonstrate all possible. By the theorem 2, we can find that the potential optimal path must be longer than the result made by hybrid method. Namely, there always possibly is a more suitable leaf in the native or later region. Instead, the first exploring amid those fore-leaves might be just viewed as an initial step. In addition to source, we might promote each endpoint to evolve by its own through screening all ownership leaves and their weights, not qualify those in which regions. 

The program can iteratively circularly update the arrays T and P, till not any total weight no longer be sought out to change again, said the evolution stops. In fact, the new method merely enlarges the search range than hybrid approach, that on encode level the rest portion is not changed. We use a form to present the logic on this step.
\begin{center}
$w_u=\text{min}\bigcup_{\in\beta}\left(w_{x} + w(x,u)\right)\text{ for }x\in L_{\in\beta}(u)\text{ and }w(x,u)\in W_{\in\tau}.$
\end{center}
~\newline
This method has got a useful idea to overcome the anti event in application any way. In our real world, sometimes the shortest path maybe implies worse in some cases. For example there is a block\textquoteright s traffic in jam and you are unfortunately to be ready to pass through that block as your previous plan. At this time, you must want to escape this under-nose nut through invoking overall traffic resources. Namely, your car needs to detour to other nice roads. This is a vivid example for anti event. 

On strategy level, we solve this problem upon three stages: firstly use partition to reveal the geometry layout; second seek an optimal path among those shortest paths; third overcome the possible anti event by each node in a sustainable evolution.

If we stand at the accuracy level, we could find the new method likely seems to achieve an extreme exact in static system and perhaps in dynamical system. But we either lack more proper evidence to stand this proposal. Author had done some trials to compare with hybrid approach. The prime problem is that the degree of overall improvement is not more than 5\%. Else for this method there is a drawback about runtime complexity, so as to we say it is not matured for this reason. Although the new method is actually to iteratively invoke the similar exploring model to promote those endpoints in evolution, but we cannot value the iterative times. Through trials we only said the quantity involving to $n$, since that the algorithm complexity may be $O(mn^2)$ only if the operation of querying weight table is a constant. On a popular sense, the method can take each endpoint in a sustainable evolution\cite{9} by a group criterion, we call it \emph{evolutionary hybrid Dijkstra\textquoteright s algorithm}.

\section{Summary}
In this paper, we introduced a universal solution to solve an ancient problem. All key points should be focused on the geometry layout. Well, we likewise used this idea to seek out the importance concept in our theoretical system, which is the polygon inequality. As well we even further found it was critical factor to influence the process of decision in greedy idea. This is a big harvest for us, but we have to recognize the fact that though this method has been at the frontline, but there still lack even more theories and trials to stand it and takes it completely matured. 

At the encode level, we eventually govern the decision stream pass through those nodes and optimize that algorithm, so that we might get rid of the domination from \emph{black-box} program, moreover we utilized a universal method to compute for dynamical system, and study the system to obtain various conclusions from trials.

What we need to emphasize is the context about this solution, it must be upon the least nodes and then to obtain an extreme result. Yes, you can reform algorithm to obtain other class outcome by altering some criteria. Note that for evolutionary hybrid Dijkstra\textquoteright s algorithm, you must be careful to select your strategy, because to a monotonic function, an extreme criterion perhaps means the outcome towards to infinity, which possible takes program into an endless loop and data overfill memory.

\section{Future Works}
At least author deem we must clarify the complexity of evolutionary algorithm to take it mature. Then after we can do even more trials to reveal more exactly features of complex system, special to dynamical system. Our goal is through more evidences to profile some essential principles to support our application and theory. Meanwhile we can define various concrete instances and let our research job in a correct way.\\
\newline
\textbf{Application.} It is natural thing to pour this idea into robot’s OS to deal the smart commute in a complex environment. For example, when a self-driving vehicle commutes in a city, he might record the data of traffic state for each road. This case likes our trial for dynamical system, while he has task in commute, he can plot an optimal lane depending on his memory about traffic, and share the information among his crew at once. 

Another case is the robots roaming in a narrow space. They must avoid such state happen, which they jam several channels, but for other channels either are idle. That said when a serve plans a crew touring in this narrow space, he has to view somewhere as resources for commute job. He can demand each crew member need to apply to him for using some resources, and estimate the lane for each applicant with the reserve table which other members have been checked in previously. 

Even to the array P and array T, they depict a layout likes a tree under some criterions. The only entrance is the source, so we can view this layout is a service network according to certain policy. For example, this topological layout might describe a plenty of servers distribution for cloud computing, a cryptosystem or an abstracted coding or compiling system. Summarily it represents an optimized layout abstracted from original defined figure. And the layout can be changed by difference entrance and relationship definition for nodes.

\end{document}